\documentclass[11pt, oneside, english]{article}

\usepackage[english]{babel}
\usepackage[utf8x]{inputenc}
\usepackage[T1]{fontenc}
\usepackage[letterpaper,top=1in,bottom=1in,left=1in,right=1in,marginparwidth=1.75cm]{geometry}
\usepackage{dsfont}
\usepackage{amsmath, amssymb, amsthm, verbatim,enumerate,bbm}
\usepackage{indentfirst, bbm}
\usepackage{appendix}
\usepackage{enumerate}
\usepackage{tikz}
\usepackage{verbatim}
\usepackage[colorinlistoftodos]{todonotes}
\usepackage[colorlinks=true, allcolors=blue]{hyperref}
\usepackage[nameinlink,capitalise,noabbrev]{cleveref}
\usepackage{subfig}
\crefname{appsec}{Appendix}{Appendices}
\creflabelformat{enumi}{#2textup{#1}#3}

\allowdisplaybreaks
\makeatletter
\newtheorem{theorem}{Theorem}[section]

\newtheorem*{namedtheorem}{\theoremname}
\newcommand{\theoremname}{testing}

\newtheorem{lemma}[theorem]{Lemma}

\newtheorem{proposition}[theorem]{Proposition}

\newtheorem*{question*}{Question}

\theoremstyle{definition}

\newtheorem{remark}[theorem]{Remark}

\theoremstyle{plain}

\makeatother

\title{A Concentration Inequality for the Facility Location Problem}
\author{
Sandeep Silwal \\
MIT \thanks{77 Massachusetts Avenue
Cambridge, MA 02139.
\tt{silwal@mit.edu}}
}
\date{}

\begin{document}
\maketitle
\global\long\def\R{\mathbb{R}}

\global\long\def\S{\mathbb{S}}

\global\long\def\Z{\mathbb{Z}}

\global\long\def\C{\mathbb{C}}

\global\long\def\Q{\mathbb{Q}}


\global\long\def\P{\mathbb{P}}

\global\long\def\F{\mathbb{F}}

\global\long\def\U{\mathcal{U}}

\global\long\def\V{\mathcal{V}}

\global\long\def\E{\mathbb{E}}

\global\long\def\Ev{\mathscr{Rk}}

\global\long\def\Dg{\mathscr{D}}

\global\long\def\Ndg{\mathscr{ND}}

\global\long\def\Rv{\mathcal{R}}

\global\long\def\Gv{\mathscr{Null}}

\global\long\def\Hv{\mathscr{Orth}}

\global\long\def\Supp{{\bf Supp}}

\global\long\def\Sv{\mathscr{Spt}}

\global\long\def\ring{\mathfrak{R}}

\global\long\def\Bad{{\boldsymbol{B}}}

\global\long\def\supp{{\bf supp}}

\global\long\def\A{\mathcal{A}}

\global\long\def\L{\mathcal{L}}

\newcommand{\event}{\mathcal E}

\newcommand{\p}{\mathbb P}

\newcommand{\eps}{\varepsilon}

\newcommand{\1}{\boldsymbol 1}
\newcommand{\N}{\mathcal{N}}

\begin{abstract}
We give a concentration inequality for a stochastic version of the facility location problem. We show the objective $C_n = \min_{F \subseteq [0,1]^2}|F|+\sum_{x\in X}\min_{f\in F}\|x-f\|$ is concentrated in an interval of length $O(n^{1/6})$ and $\E[C_n]=\Theta(n^{2/3})$ if the input $X$ consists of i.i.d. uniform points in the unit square. Our main tool is to use a geometric quantity, previously used in the design of approximation algorithms for the facility location problem, to analyze a martingale process. Many of our techniques generalize to other settings.
\\

\noindent Keywords: Facility Location; Concentration Inequality; Stochastic Optimization
\end{abstract}

\section{Introduction}
Let $X$ be a set of $n$ points in $D \subset \R^d$. The (minimum) facility location problem (with uniform demands) is the problem of finding a set of points $F \subset D$ (called facilities or centers)  to minimize the objective
\begin{equation} \label{eq:objective}
C_n(X) = \min_{F \subset D} \, |F| + \sum_{x\in X} \min_{f \in F} \| x-f\|.
\end{equation}

The facility location problem is a well studied combinatorial optimization problem and is NP-hard in general. As is the case of many other NP-hard combinatorial optimization problems, stochastic versions of these problems have been studied (see \cite{random_ref1,random_ref2, random_ref3, random_ref4} and the book \cite{book} for examples in TSP, MST, and many other problems). In this paper, we study the stochastic version of the facility location problem. In particular, if our domain $D$ is the unit square $[0,1]^2$ in $\R^2$, our result presented in Theorem \ref{thm:strong} states that $C_n$ is concentrated in an interval of length $O(n^{1/6})$ and satisfies the following concentration bound
\[ \Pr(|C_n -\E[C_n] | \ge tn^{1/6}) \le \exp(-ct^2) \]
where $\E[C_n] = \Theta(n^{2/3})$. However, our techniques are more general and can be extended to other domains and distributional assumptions.

To give more context to our result, we compare our bound against Rhee and Talagrand's concentration result for the $k$-median problem \cite{median}. The $k$-median problem is a related optimization problem where only the second term of the objective in \eqref{eq:objective} appears and where we are constrained to $|F| = k$. Rhee and Talagrand showed in \cite{median} that the cost of the objective function for the $k$-median problem concentrates on an interval of length $O(\sqrt{n/k})$. This follows from the following theorem.

\begin{theorem}[Theorem B in \cite{median}]
Let $Q_n$ be the random variable which denotes the cost of the $k$-median problem on the unit square in $\R^2$ where $n$ points are drawn independently and uniformly at random from the unit square. There exists a constant $c > 0$ such that 
\[\Pr(|Q_n - \E[Q_n]| \ge t) \le \exp(-ct^2k/n). \]
\end{theorem}

While their techniques aren't applicable in our setting, we can interpret our results as `plugging in a specific value' of $k = n^{2/3}$ even though $|F|$ is a random variable in our case. 

Our proof strategy relies on standard martingale tools but uses a more geometric and `local' representation of $C_n$ that allows us to better track the objective cost as new random points are drawn. This geometric formulation is stated in Section \ref{sec:prelim} and has been previously used in algorithmic works related to the facility location problem \cite{MP_alg, sublin_MP}. We also present a weaker concentration result using Talagrand's concentration inequality in Theorem \ref{thm:weak} which we conjecture gives us the optimal concentration result for a variety of settings. We leave it as an interesting open problem to verify this conjecture.

Lastly, we note that while many of our techniques can be adapted to more general distributions and domains, we mainly stick to the uniform distribution on the unit square in $\R^2$ for our presentation due to simplicity and clarity since this case already conveys our ideas. Furthermore, the uniform distribution is the most well-studied case for stochastic combinatorial optimization problems in general and has lead to a wide array of results and influential techniques; for example, in matchings \cite{matching1, matching2, matching3, matching4}, minimum-spanning trees \cite{mst1, mst2, mst3, mst4, mst5, mst6, mst7, mst8}, the traveling salesman problem \cite{tsp1, tsp2, tsp3, tsp4, tsp5}, bin packings \cite{bin1, bin2, bin3, bin4, bin5}, $k$-SAT \cite{sat1, sat2, sat3, sat4, sat5}, and many more problems. See the references within the cited papers, the book \cite{book} and the excellent set of notes in \cite{notes} for further examples.

\subsection{Related Work}
Piersma considered a different formulation of stochastic facility location \cite{related}. In their work, they consider a capacitated version of facility location where each facility is only allowed to `serve' a fixed number of points. Their formulation is given by an integer program with randomly drawn coefficients for their linear constraints. In contrast, our input points are random and the cost to connect a point to a facility is given by Euclidean distances rather than randomly drawn values. This leads to their integer program having a non zero probability of being infeasible whereas in our setting it is always possible to find a solution.

In addition, the `scaling' of the cost for our formulation is naturally on the order of $n^{2/3}$ whereas in \cite{related}, the scaling is $n$. Furthermore, the goal of \cite{related} is to mostly study the convergence of the cost of their formulation using central limit type theorems whereas for us we are more concerned with concentration. Finally, our formulation is more geometric and closely related to the stochastic $k$-median problem studied previously in \cite{median}.

\section{Preliminaries}\label{sec:prelim}
Our points are given by $X = (X_1, \cdots, X_n)$ and all our asymptotics are as $n \rightarrow \infty$. 

 A key point about Rhee and Talagrand's method is that it relies heavily on the fact that the $k$-median objective is only composed of `local' terms (representing the cost incurred by every input point). Local behaviour is often the key to getting concentration bounds for complicated processes, even in other settings such as random graphs, since they allow tools such as bounded difference or Lipschitz concentration inequalities to be used. On first glance, the facility location problem does not seem to have nice local structures due to the additional global $|F|$ term. However in the algorithmic literature about facility location, a suitable local geometric quantity has been considered which will form the basis of our analysis. This is additionally interesting since it's an example of an algorithmic result being used in probability. The geometric quantity is defined as follows.

Let $B(p,r)$ denote the ball of radius $r$ centered at $p$. For each $p \in X$, define radius $r_p > 0$ to satisfy the following relation. 
\begin{equation}\label{eq:rdef}
    \sum_{q \in B(p, r_p) \cap X} (r_p - \|p-q\|) = 1.
\end{equation}

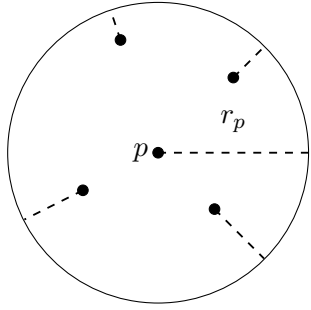
\begin{figure}
\centering
\begin{tikzpicture} 
\draw(0,0) circle (0cm);
\draw (4,0) circle (2cm);
\filldraw 
(4,0) circle (2pt) node[align=left, left] {$p$};
\filldraw 
(5,1) circle (2pt) node[align=left, right] {$ $};
\filldraw 
(3,-.5) circle (2pt) node[align=left, right] {$ $};
\filldraw 
(3.5,1.5) circle (2pt) node[align=left, right] {$ $};
\filldraw 
(4.75,-.75) circle (2pt) node[align=left, right] {$ $};

\draw[dashed, line width= 0.7pt](4,0) --  node [label=above:{$r_p$}] {} (6,0) -- (0:6);
\draw[dashed, line width= 0.7pt](5,1) -- ++(45:0.586);
\draw[dashed, line width= 0.7pt](3,-.5) -- ++(206.57:0.87);
\draw[dashed, line width= 0.7pt](3.5,1.5) -- ++(108.43:0.42);
\draw[dashed, line width= 0.7pt](4.75,-.75) -- ++(-45:0.93);
\end{tikzpicture}
\caption{For each point $p$, we compute a radius $r_p$ such that the dotted lines add to $1$.}
\label{fig:circle_radii}
\end{figure}

We record some properties of $r_p$, some of which were used in previous algorithmic works \cite{MP_alg, sublin_MP}.

\begin{lemma}[Lemma $1$ in \cite{sublin_MP}]\label{lem:intersection}
Every $p \in X$ satisfies $r_p \ge 1/|B(p, r_p) \cap X|$.
\end{lemma}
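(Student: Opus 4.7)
The plan is a one-line bound directly from the defining equation \eqref{eq:rdef}. For each $q \in B(p, r_p) \cap X$ we have $\|p - q\| \le r_p$, so every summand $r_p - \|p - q\|$ lies in $[0, r_p]$. Summing these trivial upper bounds gives
\[
1 = \sum_{q \in B(p, r_p) \cap X} (r_p - \|p-q\|) \le |B(p, r_p) \cap X| \cdot r_p,
\]
and rearranging yields $r_p \ge 1/|B(p, r_p) \cap X|$, which is the claim.

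There is essentially no obstacle here: the only thing worth checking is that $B(p, r_p) \cap X$ is nonempty, which is automatic since $p \in X$ and $p \in B(p, r_p)$ (with $r_p > 0$ by definition), so the denominator is at least $1$ and the bound is nontrivial. No further geometric or probabilistic machinery is needed.
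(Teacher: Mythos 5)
Your proof is correct and is the standard one-line argument for this fact: each summand in \eqref{eq:rdef} is at most $r_p$, so $1 \le |B(p,r_p)\cap X|\cdot r_p$. The paper itself does not reprove this lemma (it cites it from prior work), but your derivation is exactly the intended argument and the nonemptiness check is a sensible touch.
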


\begin{proposition}\label{lem:radii_ub}
Let $q \in B(p, r_p) \cap X$. Then $r_q \le 3r_p$.
\end{proposition}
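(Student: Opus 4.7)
My plan is to actually prove the stronger bound $r_q \le 2r_p$, from which the stated bound $r_q \le 3r_p$ follows immediately. The proof rests on two observations: a monotonicity property of the radius-defining equation, and two applications of the triangle inequality.

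First, I would package the defining equation \eqref{eq:rdef} in a convenient form by introducing the one-dimensional function
\[ f_q(r) \;=\; \sum_{q' \in B(q, r) \cap X} (r - \|q - q'\|). \]
This function is continuous, piecewise linear in $r$, and its slope between breakpoints equals $|B(q, r) \cap X|$; since $q$ itself contributes, the slope is at least $1$ for all $r \ge 0$. Hence $f_q$ is strictly increasing from $f_q(0) = 0$ to $\infty$, so the relation $f_q(r_q) = 1$ determines $r_q$ uniquely, and any upper bound $f_q(R) \ge 1$ forces $r_q \le R$. It therefore suffices to verify that $f_q(2r_p) \ge 1$.

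For that verification I would use the triangle inequality twice. Since $\|p - q\| \le r_p$, every $q' \in B(p, r_p) \cap X$ satisfies $\|q - q'\| \le \|q - p\| + \|p - q'\| \le 2r_p$, so $B(p, r_p) \cap X \subseteq B(q, 2r_p) \cap X$. The same inequality rewritten gives $2r_p - \|q - q'\| \ge r_p - \|p - q'\|$ for each such $q'$. Summing over $B(p, r_p) \cap X$ and invoking the defining equation \eqref{eq:rdef} for $r_p$,
\[ f_q(2r_p) \;\ge\; \sum_{q' \in B(p, r_p) \cap X} \bigl(2r_p - \|q - q'\|\bigr) \;\ge\; \sum_{q' \in B(p, r_p) \cap X} \bigl(r_p - \|p - q'\|\bigr) \;=\; 1, \]
so $r_q \le 2r_p \le 3r_p$, completing the proof.

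There is essentially no obstacle here; the content is a two-line triangle-inequality calculation, and the only subtle point is the monotonicity/continuity of $f_q$, which is needed to justify the implication $f_q(2r_p) \ge 1 \Rightarrow r_q \le 2r_p$. The slack between the natural bound $2r_p$ and the stated bound $3r_p$ suggests that the constant $3$ is chosen for convenience later, perhaps to accommodate a strict-inequality version or a boundary effect in the subsequent arguments.
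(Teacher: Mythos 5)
Your proof is correct, and it in fact establishes the sharper bound $r_q \le 2r_p$. The route differs from the paper's in a small but genuine way. The paper works with the ball $B(q,3r_p)$: by the triangle inequality every $q' \in B(p,r_p)\cap X$ lies within $2r_p$ of $q$, hence contributes at least $3r_p - 2r_p = r_p$ to the sum defining $r_q$ at radius $3r_p$; it then invokes Lemma \ref{lem:intersection} to count at least $1/r_p$ such points, so the total is at least $1$. You instead dominate the defining sum for $q$ at radius $2r_p$ termwise by the defining sum for $p$, using $2r_p - \|q-q'\| \ge r_p - \|p-q'\|$, so the total is at least $1$ directly from \eqref{eq:rdef} with no counting step. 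What your version buys is independence from Lemma \ref{lem:intersection} and the better constant $2$; what the paper's version buys is consonance with the ``each dashed line contributes at least $r_p$'' picture it draws from Figure \ref{fig:circle_radii} and a reuse of the lemma it has already stated. Your explicit treatment of the monotonicity and continuity of $f_q$, which the paper leaves implicit, is a welcome addition, since that is exactly what licenses the inference from $f_q(R)\ge 1$ to $r_q \le R$ (and indeed what guarantees $r_q$ is well defined in the first place).
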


\begin{proof}
Any point $q' \in B(p, r_p) \cap X$ satisfies $\|q-q'\| \le 2r_p$ from the triangle inequality. If we consider the ball $B(q, 3r_p)$ then the sum of the dashed lines in Figure \ref{fig:circle_radii} contributed by points from $B(p, r_p) \cap X$ is at least $r_p$ each. This is because all points $q' \in B(p,r_p) \cap X$ must also be in $B(q, 2r_p)$. The result follows from noting that $|B(p, r_p) \cap X| \ge 1/r_p$ due to Lemma \ref{lem:intersection}.
\end{proof}

\begin{proposition}\label{prob:nearby}
In the optimal solution of \eqref{eq:objective}, every point $p$ must have some $f \in F$ at distance at most $3r_p$. 
\end{proposition}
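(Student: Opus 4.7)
The plan is to argue by contradiction: assume the optimal facility set $F^*$ has no facility within distance $3r_p$ of some point $p \in X$, and exhibit a strictly better solution by augmenting $F^*$ with a new facility located exactly at $p$. The change in objective is $+1$ for the new facility, against a savings in connection cost that I will show exceeds $1$, yielding the contradiction.

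The first key step is a triangle inequality observation. If $F^*$ contains no facility within $3r_p$ of $p$, then for every $q \in B(p, r_p) \cap X$ and every $f \in F^*$, we have $\|q - f\| \ge \|p - f\| - \|p - q\| > 3r_p - r_p = 2r_p$. Therefore, in the current solution, each such $q$ contributes more than $2r_p$ to the connection-cost term. After adding a facility at $p$, each such $q$ can reroute to $p$ at cost at most $\|q - p\|$, so the savings attributable to $q$ is strictly greater than $2r_p - \|q - p\|$.

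The second key step is to sum these savings using the definition of $r_p$ from \eqref{eq:rdef} together with Lemma~\ref{lem:intersection}. Letting $N := |B(p, r_p) \cap X|$, the total savings is strictly greater than
\[
\sum_{q \in B(p, r_p) \cap X} \bigl(2r_p - \|q - p\|\bigr) = N \cdot r_p + \sum_{q \in B(p, r_p) \cap X} \bigl(r_p - \|p - q\|\bigr) = N \cdot r_p + 1 \ge 2,
\]
where the last inequality uses $N \ge 1/r_p$ from Lemma~\ref{lem:intersection}. Since the new facility costs only $1$, the net change in objective is strictly negative, contradicting optimality of $F^*$.

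I do not anticipate any significant obstacle; the proof is really just a verification that the definition of $r_p$ in \eqref{eq:rdef} has been calibrated so that adding a facility at $p$ is a strictly improving move whenever no existing facility lies within $3r_p$. The only subtle point is being careful about strict versus weak inequalities in the triangle inequality chain, so that the savings bound of $2$ is strict and therefore beats the unit cost of opening a facility.
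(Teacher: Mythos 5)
Your proof is correct and follows essentially the same route as the paper's: assume no facility lies within $3r_p$ of $p$, place a new one at $p$, and combine the triangle inequality with Lemma~\ref{lem:intersection} to show the connection-cost savings exceed the unit cost of opening the facility. The only cosmetic difference is that you evaluate the savings exactly via the defining relation \eqref{eq:rdef}, whereas the paper bounds each rerouted connection crudely by $r_p$; both accountings give the same conclusion.
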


\begin{proof}
Suppose that a point $p$ does not have a center $f \in F$ within distance $3r_p$. We show in this case that the cost can be reduced. We know from Lemma \ref{lem:intersection} that $|B(p, r_p) \cap X| \ge 1/r_p$. Let $m$ be the number of points in $B(p, r_p) \cap X$ \emph{excluding} $p$. It follows that these points don't have an $f$ within distance $2r_p$. Therefore in total, the contribution of the points in $|B(p, r_p) \cap X|$ to the objective function is at least $2mr_p + 3r_p$. Now if we put a new $f$ at the point $p$, then the $m$ points all have a facility within distance $r_p$ and therefore, the cost of the solution decreases by at least
\[(2mr_p + 3r_p) - (1 + mr_p) = (m+3)r_p -1  = ((m+1)r_p -1)+ 2r_p > 0 \] where the last inequality follows from the fact that $(m+1)r_p \ge 1$ due to Equation \eqref{eq:rdef}. Thus it follows that the optimal solution must have some $f \in F$ that is within distance $3r_p$ of $p$.
\end{proof}

\begin{lemma}\label{lem:const}
There exists constants $c, C > 0$ such that $C \sum_{p \in X} r_p \ge C_n \ge c \sum_{p \in X} r_p$.
\end{lemma}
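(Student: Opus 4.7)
The plan is to prove the two inequalities separately: the upper bound by exhibiting a specific feasible $F$ of small cost, and the lower bound by producing a dual-style certificate that applies to every feasible $F$. Both directions rely on the defining equation \eqref{eq:rdef} and on Proposition~\ref{lem:radii_ub}.

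For the upper bound $C_n \le C \sum_p r_p$, I would use a Mettu--Plaxton style greedy: process $X$ in order of increasing $r_p$, and add $p$ to $F$ whenever no facility already placed lies within distance $2 r_p$ of $p$. Every $p \notin F$ then has a facility within distance $2 r_p$, so the service cost is at most $2 \sum_p r_p$. For the facility cost, the greedy rule forces the balls $\{B(p, r_p) : p \in F\}$ to be pairwise disjoint, since $p_1, p_2 \in F$ with $r_{p_1} \le r_{p_2}$ must satisfy $\|p_1 - p_2\| > 2 r_{p_2} \ge r_{p_1} + r_{p_2}$. I would then argue that $r_q \ge r_p/2$ for every $p \in F$ and every $q \in B(p, r_p) \cap X$: the case $r_q \ge r_p$ is immediate, and when $r_q < r_p$ the point $q$ was processed before $p$ and must have been covered at that time by some other $p'' \in F$ with $\|p'' - q\| \le 2 r_q$; combining $\|p - p''\| > 2 r_p$ (forced by the greedy rule at $p$) with the triangle inequality gives $r_q > r_p/2$. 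Applying Lemma~\ref{lem:intersection} then yields $\sum_{q \in B(p, r_p) \cap X} r_q \ge (r_p/2)(1/r_p) = 1/2$; summing over the disjoint balls produces $|F| \le 2\sum_q r_q$, so $C_n \le |F| + \sum_p d(p, F) \le 4 \sum_p r_p$.

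For the lower bound $C_n \ge c\sum_p r_p$, I would set $\alpha_p := r_p/4$ and establish the local covering estimate
\[ \sum_{p \in X} (\alpha_p - \|p - f\|)^+ \le 1 \qquad \text{for every } f \in [0,1]^2. \]
To prove this, let $S_f = \{p : \alpha_p > \|p - f\|\}$ and, assuming $S_f \ne \emptyset$, pick $p^* \in S_f$ maximizing $r_{p^*}$. The triangle inequality gives $\|p - p^*\| < r_{p^*}/2$ for $p \in S_f$, so $S_f \subseteq B(p^*, r_{p^*})$; then Proposition~\ref{lem:radii_ub} gives $r_p \le 3 r_{p^*}$, and combined with another use of the triangle inequality this yields the termwise bound $(\alpha_p - \|p - f\|)^+ \le r_{p^*} - \|p - p^*\|$. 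Summing over $S_f$ and comparing with the defining equation for $r_{p^*}$ (noting that every summand is non-negative on $B(p^*, r_{p^*}) \cap X$) finishes the estimate. To conclude, for any feasible $F$ assign each $p$ to its nearest $f_p \in F$; the pointwise inequality $\alpha_p \le \|p - f_p\| + (\alpha_p - \|p - f_p\|)^+$, summed over $p$ and regrouped by facility, yields $\sum_p \alpha_p \le \sum_p \|p - f_p\| + |F|$, hence $C_n \ge \sum_p r_p/4$.

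The hardest step on each side is the same flavor of local analysis: establishing $r_q \ge r_p/2$ for $q \in B(p, r_p) \cap X$ with $p \in F$ on the upper-bound side, and verifying the local covering estimate on the lower-bound side. Both reduce to combining Proposition~\ref{lem:radii_ub} (which controls the ratio of nearby radii) with the triangle inequality, after which \eqref{eq:rdef} closes the bound; this yields explicit constants $c = 1/4$ and $C = 4$.
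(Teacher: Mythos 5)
Your proof is correct, but it takes a more self-contained route than the paper. The paper outsources both directions to the known Mettu--Plaxton results, which establish that $\sum_{p} r_p$ is a constant-factor approximation of the objective \emph{when facilities are restricted to lie in $X$}; its only real work is bridging the restricted and unrestricted problems (the upper bound transfers for free since enlarging the feasible set for $F$ can only decrease the cost, and the lower bound is obtained by relocating each optimal facility to its nearest served point, losing a factor of $2$ in the service cost). You instead reprove both directions from scratch: the upper bound via the greedy that opens $p$ when no facility lies within $2r_p$ (with the disjoint-balls charging argument and the $r_q \ge r_p/2$ claim, both of which check out, the latter implicitly using that $q \in B(p,r_p) \cap F$ is impossible for $p \in F$), and the lower bound via a dual-fitting certificate $\alpha_p = r_p/4$ whose local covering estimate $\sum_p (\alpha_p - \|p-f\|)^+ \le 1$ holds for \emph{every} $f \in [0,1]^2$, so it directly lower-bounds the unrestricted optimum with no relocation step. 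What your approach buys is explicit constants ($c = 1/4$, $C = 4$), independence from the cited references, and a cleaner treatment of arbitrary facility positions; what the paper's approach buys is brevity. One tiny redundancy: in the lower bound you invoke Proposition~\ref{lem:radii_ub} to get $r_p \le 3r_{p^*}$, but this already follows from your choice of $p^*$ as the maximizer of $r$ over $S_f$.
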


\begin{proof}
From \cite{MP_alg, sublin_MP} we know that $\sum_{p \in X} r_p$ is a constant factor approximation of $C_n$ \emph{when we restrict} the set $F$ to be a subset of the points $X$. In our case, we want to study a more general version where the set $F$ can come from the entire space. Previous results readily extend to our desired upper bound since not restricting $F$ only decreases the value of the objective function. 

For the lower bound, we denote $C_n'$ as the optimal cost of the objective where $F$ is restricted to points in $X$. Consider the optimal solution for Problem \eqref{eq:objective} (whose objective cost is $C_n$) and denote its set of facilities as $F^*$. For each $f \in F^*$, consider the set of $X$ that it \emph{serves}: for each $f$ we have disjoint subsets $X_f \subseteq X$ such that $f$ is the closest point in $F^*$ to points in $X_f$, breaking ties arbitrarily. Move each $f$ to its closest point in $X_f$. This increases the cost of the objective in \eqref{eq:objective} by at most $\sum_{x\in X} \min_{f \in F^*} \| x-f\|$ since the distance from each point $p \in X_f$ to $f$ increased by at most $\|p-f\|$. Furthermore, we have that this new configuration is a valid solution for the objective where we restrict the set of facilities to come from the points in $X$ and therefore, serves as an upper bound for $C_n'$. Altogether, we have
\[ 2C_n \ge 2|F^*| +  2\sum_{x\in X} \min_{f \in F^*} \| x-f\| \ge  |F^*|  + 2\sum_{x\in X} \min_{f \in F^*} \| x-f\| \ge C_n' \ge c\sum_{p \in X} r_p\]
where the last relation follows from \cite{sublin_MP}. Adjusting the constants gives us our desired bound. 
\end{proof}

Lastly we calculate the expected value of $C_n$ for uniformly random inputs.

\begin{theorem}\label{thm:expectation}
The expected value of the objective \eqref{eq:objective} for i.i.d.\ uniform points in $[0,1]^2$ satisfies
$\E[C_n] = \Theta(n^{2/3})$.
\end{theorem}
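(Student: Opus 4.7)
The plan is to prove matching upper and lower bounds of order $n^{2/3}$. For the upper bound $\E[C_n] = O(n^{2/3})$, I would exhibit an explicit deterministic feasible solution: tile $[0,1]^2$ by a regular axis-aligned grid of side length $s = n^{-1/3}$, place a facility at the center of each of the $\lceil n^{1/3}\rceil^2 = O(n^{2/3})$ cells, and observe that every point of $X$ is within distance $s\sqrt{2}/2$ of the nearest facility. Hence
\[
C_n(X) \;\le\; O(n^{2/3}) + n \cdot \tfrac{s\sqrt{2}}{2} \;=\; O(n^{2/3})
\]
pointwise in $X$, and in particular in expectation.

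For the lower bound $\E[C_n] = \Omega(n^{2/3})$, Lemma \ref{lem:const} reduces the task to showing $\E\bigl[\sum_{p\in X} r_p\bigr] = \Omega(n^{2/3})$, and by exchangeability this is $n\cdot \E[r_{X_1}]$, so it suffices to prove $\E[r_{X_1}] = \Omega(n^{-1/3})$. Set $r = \alpha n^{-1/3}$ for a small constant $\alpha$ to be chosen. Since $p = X_1$ itself contributes $r_p$ to the sum in \eqref{eq:rdef}, the event $\{r_{X_1} \le r\}$ (for $r \le 1/2$, i.e.\ $n$ large) forces $\sum_{j\ge 2}(r - \|X_1 - X_j\|)_+ \ge 1/2$. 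Conditioning on $X_1 = p$ with $p$ at distance at least $r$ from the boundary, the ball $B(p,r)$ lies inside $[0,1]^2$, so a direct integration in polar coordinates gives
\[
\E\!\left[\sum_{j\ge 2}(r - \|p - X_j\|)_+\right] = (n-1)\int_0^r (r-t)\cdot 2\pi t\,dt = \frac{(n-1)\pi r^3}{3} = O(\alpha^3),
\]
which is at most $1/4$ once $\alpha$ is taken small enough. Markov's inequality then yields $\Pr[r_{X_1} \le r \mid X_1 = p] \le 1/2$ for every such interior $p$, so $\E[r_{X_1}] \ge (r/2)\cdot \Pr[X_1 \text{ is interior}] = \Omega(n^{-1/3})$.

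The only technical wrinkle is boundary handling: when $p$ lies within distance $r$ of an edge of $[0,1]^2$, the ball $B(p,r)$ is clipped. However, clipping only \emph{decreases} the expected defining sum, which makes $r_p$ even more likely to exceed $r$; and in any case the interior region has area $1 - O(n^{-1/3}) = 1-o(1)$, so this region alone already contains enough of the $n$ points (in expectation and with high probability) to yield the $\Omega(n^{2/3})$ lower bound. I do not anticipate a serious obstacle here: the argument is essentially a first-moment computation calibrated against the explicit grid, and the two bounds meet at the scale $r \asymp n^{-1/3}$ predicted by the heuristic $nr^3 \asymp 1$.
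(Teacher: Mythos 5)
Your proposal is correct, and both halves take a genuinely different route from the paper. The paper proves both bounds by reducing to $\E[r_p]$ via Lemma~\ref{lem:const} and then estimating $\E[r_p] = \Theta(n^{-1/3})$ through binomial concentration for $|B(p,r)\cap X|$ (for the lower bound, arguing that the worst case is all $\Theta(n^{1/3})$ points sitting at $p$, forcing $r_p \ge \Omega(n^{-1/3})$). Your upper bound instead exhibits the grid solution directly, giving the \emph{deterministic} pointwise bound $C_n = O(n^{2/3})$; this is simpler, bypasses the upper-bound half of Lemma~\ref{lem:const} entirely, and is in fact the same construction the paper invokes later inside the proof of Theorem~\ref{thm:weak}. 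Your lower bound uses the same reduction to $\E[r_{X_1}] = \Omega(n^{-1/3})$ but replaces the paper's tail-bound-plus-worst-configuration argument with a clean first-moment computation: since $X_1$ itself contributes $r_p$ to \eqref{eq:rdef}, the event $\{r_{X_1}\le r\}$ forces $\sum_{j\ge 2}(r-\|X_1-X_j\|)_+ \ge 1/2$, whose mean is $O(\alpha^3)$ at scale $r=\alpha n^{-1/3}$, so Markov kills it for small $\alpha$. This is arguably tighter reasoning than the paper's (it needs only an expectation, not concentration, and your observation that boundary clipping only helps is correct). The one thing the paper's version buys that yours does not is an explicit upper bound $\E[r_p]=O(n^{-1/3})$ as a standalone fact, which is reused in the martingale-difference estimates of Theorem~\ref{thm:strong}; your proof of Theorem~\ref{thm:expectation} is self-contained and valid, but does not supply that ingredient.
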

\begin{proof}
We know from Lemma \ref{lem:const} that $\sum_{p \in X} r_p$ is a constant factor approximation to the objective given in \eqref{eq:objective}. Therefore, we fix our attention to calculating $\E[r_p]$. Fix a point $p$ and let $r = n^{-1/3}$. The number of points that fall in $B(p, r)$ is distributed as $\text{Bin}(n, cr^2)$ for some constant $c$. By a standard binomial concentration, we know that $|B(p,r) \cap X|  = \Theta(n^{1/3})$ with probability at least $1-e^{-\Theta( n^{1/3} )}$. For example, this follows from Theorems $2.3$ and $2.4$ in \cite{chung2006}.

Conditioning on this event $\mathcal{E}$, we see that from the geometric interpretation of $r_p$ in Figure \ref{fig:circle_radii} that increasing $r$ by $Cn^{-1/3}$ for some sufficiently large constant $C$ will imply $r_p = O(n^{-1/3})$. Thus, 
\[ \E[r_p] \le \E[r_p \mid \mathcal{E}] + \Pr(\mathcal{E}^c) \E[r_p \mid \mathcal{E}^c] = O(n^{-1/3}) + e^{-\Theta(n^{1/3})} = O(n^{-1/3}).\]

For the lower bound, we consider the same approach as above but let $r = c'n^{-1/3}$ for a sufficiently small constant $c'$. In this case, we see that $|B(p,r) \cap X|  \le c''n^{1/3}$ with probability at least  $1-e^{-\Theta( n^{1/3} )}$ for a sufficiently small constant $c''$. Again conditioning on this event $\mathcal{E}$, we see that to make $r_p$ as small as possible, the worst configuration is where all the points in $|B(p,r) \cap X|$ are located at $p$. In that case, we see that $r_p \ge 1/(c''n^{1/3})$. Then we calculate that
\[ \E[r_p] \ge \Pr(\mathcal{E})\E[r_p \mid \mathcal{E}] = \Omega( n^{-1/3} ).\]
The final result follows by linearity of expectations.
\end{proof}
\begin{remark}
Theorem \ref{thm:expectation} is essentially the only place where the uniform distribution assumption and our domain assumption of the unit square in $\mathbb{R}^2$ are used as they allow for an easy calculation of $\E[r_p]$. Most of our concentration arguments in Section \ref{sec:concentration} generalize to arbitrary distributions and arbitrary domains where the appropriate value of $\E[r_p]$ is used.
\end{remark}

\subsection{A Heuristic Derivation of the Concentration Bound}\label{sec:heuristic}
As stated in the introduction, our main result of $C_n$ being concentrated in an interval of length $O(n^{1/6})$ for the case where the input points are i.i.d.\ uniform on the unit square in $\R^2$ can be interpreted as picking a suitable choice of $k$ in Rhee and Talagrand's bound. Indeed, consider the $k$-median problem where $k$ is some parameter specified later. Heuristically, it makes sense to pick the $k$ facilities in a uniform grid of squares of dimension $1/\sqrt{k} \times 1/\sqrt{k}$. In such a case, the distance from any point to its nearest facility is at most $\Theta(1/\sqrt{k})$ and there are $k$ facilities. Thus, the facility location problem objective is $\Theta(n/\sqrt{k}) + k$. Minimizing this as a function of $k$, we see that $k = \Theta(n^{2/3})$. Now Rhee and Talagrand's concentration bound states that the cost of the random $k$-median concentrates on an interval of length $O(\sqrt{n/k})$. `Plugging in' $k = n^{2/3}$ we get an interval of $O(n^{1/6})$ which matches the bound given by Theorem \ref{thm:strong}.

Of course, the above justification is pure heuristics and not rigorous. In addition, Rhee and Talagrand's proof is substantially different than ours. In their work, they exploit the fact that the $k$-median objective is composed of only `local' terms whereas we have a `global' term $|F|$. However, we rely on the geometric properties of the radii $r_p$ outlined above. 

Note that the sum of the radii $r_p$ only serves as a constant factor approximation to the objective value. Therefore, it is \emph{not sufficient} to understand the concentration of the sum of the radii values if we really want to get concentration on the order of $o(\E[C_n])$. Nevertheless, we are able to leverage their properties to provide such a concentration bound for the objective value $C_n$.

\section{Concentration}\label{sec:concentration}
We prove our main concentration inequality in this section. First, we present a suboptimal concentration inequality that follows from Talagrand's inequality. It is interesting to note that an application of this inequality is not sufficient to provide us with the best concentration bound, which is a rare occurrence. Nonetheless, we conjecture that a sharper analysis of our proof using Talagrand's inequality should result in the optimal concentration bound.

We first recall Talagrand's concentration inequality for `non uniform' differences.

\begin{theorem}[Talagrand's Concentration Inequality]\label{thm:talagrand}
Let $f$ be a function on the product space $\Omega = \prod_{i=1}^n \Omega_i$ such that for every $x \in \Omega$, there exists $\alpha_i(x) \ge 0$ with 
\[f(x) \le f(y) + \sum_{i : x_i\ne y_i} \alpha_i(x) \]
for all $y \in \Omega$. Let $M$ denote the median of $f$ and 
\[ c = \sup_{x \in \Omega} \, \sum_{i=1}^n \alpha_i(x)^2. \]
Then,
\[ \Pr(|f-M| \ge t) \le 2e^{- t^2/4c}.\]
\end{theorem}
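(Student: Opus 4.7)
The plan is to derive Theorem~\ref{thm:talagrand} from Talagrand's isoperimetric-type inequality for the convex distance on product probability spaces, via a standard Lipschitz-to-distance translation. Concretely, the key ingredient I would invoke is the inequality
\[
\int_\Omega \exp\!\bigl(d_T(x,A)^2/4\bigr)\, dP(x) \;\le\; \frac{1}{P(A)},
\]
valid for every measurable $A\subseteq\Omega$ of positive measure, where
\[
d_T(x, A) \;=\; \sup_{\alpha \in \R_{\ge 0}^n,\ \|\alpha\|_2 \le 1}\ \inf_{y \in A}\ \sum_{i:\,x_i \ne y_i} \alpha_i .
\]
Markov's inequality then gives the tail $\Pr(d_T(\cdot,A)\ge s)\le e^{-s^2/4}/P(A)$ for all $s\ge 0$.

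The heart of the argument is translating the one-sided Lipschitz hypothesis on $f$ into a lower bound on $d_T$. If $f(x)-f(y)\ge t$ for a particular pair $(x,y)$, then the hypothesis immediately yields $\sum_{i:\,x_i\ne y_i}\alpha_i(x)\ge t$. Plugging the normalized vector $\beta(x)=\alpha(x)/\|\alpha(x)\|_2$ into the definition of $d_T$ as a test vector (it is admissible, being nonnegative with unit $\ell_2$ norm), and using $\|\alpha(x)\|_2\le \sqrt{c}$, one obtains $d_T(x,A)\ge t/\sqrt{c}$ for any set $A\subseteq\{y:f(y)\le f(x)-t\}$. This single algebraic step is what the whole proof pivots on.

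For the upper tail, I would take $A=\{y:f(y)\le M\}$, so $P(A)\ge 1/2$. Every $x$ with $f(x)\ge M+t$ lies in $\{d_T(\cdot,A)\ge t/\sqrt{c}\}$, giving
\[
\Pr(f\ge M+t) \;\le\; \Pr(d_T(\cdot,A)\ge t/\sqrt{c}) \;\le\; 2e^{-t^2/(4c)} .
\]
The lower tail is the slightly trickier direction, because a direct bound on $f(y)-f(x)$ would require an $\alpha(y)$ that varies with $y$ and so cannot be pulled out of the infimum as a single test vector for $d_T$. I would sidestep this by reversing the roles: set $A=\{y:f(y)\le M-t\}$ and observe that every $x$ with $f(x)\ge M$ now satisfies $d_T(x,A)\ge t/\sqrt{c}$ by the step above. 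Combining $\Pr(f\ge M)\ge 1/2$ with the Talagrand tail bound forces $P(A)\le 2e^{-t^2/(4c)}$, and a union bound over the two tails yields the stated inequality (up to an inessential constant factor).

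The main technical obstacle, were one to prove the convex-distance inequality from scratch rather than cite it, is the induction on the number of coordinates $n$, conditioning on the last coordinate and combining the two resulting exponential-moment bounds via Hölder's inequality. Because this inequality is a classical result in the concentration-of-measure literature, I would cite it as a black box and regard the Lipschitz-to-distance translation and the lower-tail trick as the only real content of the write-up.
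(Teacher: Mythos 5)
Your derivation is correct, but note that the paper does not actually prove this statement: it is recalled as a known form of Talagrand's inequality and used as a black box, so there is no in-paper argument to compare against. Your route --- reducing the one-sided Lipschitz hypothesis to a lower bound on the convex distance and then invoking the isoperimetric inequality $\int e^{d_T(\cdot,A)^2/4}\,dP \le 1/P(A)$ via Markov --- is the standard derivation, and the two steps that carry the content are both executed correctly: plugging the normalized vector $\alpha(x)/\|\alpha(x)\|_2$ into the supremum defining $d_T$ to get $d_T(x,A)\ge t/\sqrt{c}$ whenever $A\subseteq\{y: f(y)\le f(x)-t\}$, and handling the lower tail by taking $A=\{f\le M-t\}$ and bounding $P(A)$ from the fact that $\{f\ge M\}$ has measure at least $1/2$ and is contained in $\{d_T(\cdot,A)\ge t/\sqrt{c}\}$. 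Two minor points: your union bound yields $4e^{-t^2/(4c)}$ rather than the stated $2e^{-t^2/(4c)}$, which you flag and which is indeed inessential (the bound is vacuous for small $t$, and the leading constant varies across statements of this theorem in the literature); and the degenerate case $\alpha(x)\equiv 0$ should be dispatched by observing that such an $x$ is a global minimizer of $f$ and hence cannot lie in the upper-tail event for $t>0$.
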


Using Theorem \ref{thm:talagrand}, we can prove a weaker concentration result that states that $C_n$ is concentrated in an interval of length $n^{1/3}$ in the case of uniform inputs in $[0,1]^2$.

\begin{theorem}[Weak Concentration]\label{thm:weak}
Suppose the points $X = (X_1, \cdots, X_n)$ are chosen independently from some domain $D \subset \R^d$. For each point $p \in X$, define the radius $r_p$ according to \eqref{eq:rdef}. Let $C_n$ denote the cost of the objective function \eqref{eq:objective}. $C_n$ satisfies the concentration inequality
\[ \Pr(|C_n -\textup{Med}(C_n) | \ge t) \le e^{-t^2/s} \]
where $s$ is any upper bound on the quantity $\sum_{p \in X} r_p^2$ for \textbf{any} set of $n$ points chosen from $D$.
In particular, if $D = [0,1]^2$ and the points in $X$ are chosen independently in $[0,1]^2$ (not necessarily from the uniform distribution), we have
\[\Pr(|C_n - \textup{Med}(C_n)| \ge t ) \le e^{-t^2/O(n^{2/3})}\]
where $\textup{Med}(C_n)$ denotes the median value of $C_n$, i.e., $C_n$ is concentrated in an interval of length $n^{1/3}$.
\end{theorem} 

\begin{proof}
Fix an arbitrary collection of points $X = (X_1, \cdots, X_n)$. We define our vector $\alpha$ by letting the $i$th coordinate of $\alpha$ be equal to $Cr_i$ for a suitably large constant $C$. Now given an optimal clustering of a different set of points $Y = (Y_1, \cdots, Y_n)$, we want to extend it to a clustering of $X$ by only using additional `budget' given by  $\sum_{X_i \ne Y_i} \alpha_i(X)$.

Take the set of facilities for $Y$.  Our goal is to show that we can find a facility for every point $p$ in $X \setminus Y$ within distance $O(r_p)$ where the constant in the $O$ doesn't depend on any parameters of the problem. To do this, we first consider the following two cases:
\\

\noindent \textbf{Case 1:}  At least $1/2$ of the points of $B(p, r_p) \cap X$ are in $Y$. 
\\

\noindent Let $q$ be any such point in the intersection. Define $r_q^Y$ be the radius of $q$ calculated according to \eqref{eq:rdef} but using only the points in $Y$. We claim that $r_q^Y = O(r_p)$. To show this, we know from Lemma \ref{lem:intersection} that $|B(p, r_p) \cap X| \ge 1/r_p$ so at least $1/2r_p$ points are in $|B(p, r_p) \cap X \cap Y|$. If we go radius $O(r_p)$ away from $q$, then the sum $\eqref{eq:rdef}$ in $Y$ will be more than $1$, which implies $r_q^Y = O(r_p)$. Thus from Proposition \ref{prob:nearby}, we know that some facility of $Y$ will be within distance $O(r_p)$ from $p$.
\\

\noindent \textbf{Case 2:} At least $1/2$ of the points of $B(p, r_p) \cap X$ are not in $Y$. 
\\

\noindent In this case, we want to find enough points in $B(p,r_p)$ that are in $X$ but not in $Y$ to `pay for a new center' using their radii (that's the budget we are allowed from Theorem \ref{thm:talagrand}). Pick a large constant $C$. We can assume that every $w \in B(p, Cr_p) \cap X$ doesn't fall in case $1$, i.e., the ball $B(w, r_w)$ contains at least $1/2$ of its points from $X$. Indeed, otherwise, $p$ will have a facility in radius $O(r_p)$ from the observation that any $w$ satisfies $r_w = O(r_p)$ from Proposition \ref{lem:radii_ub}. 

Now consider the $w$ in the ball $B(p, r_p) \cap X$ with the smallest radius $r_w$. If $r_w \ge r_p/2$ then we can pay for a new facility from the points in $B(p, r_p) \cap X$ that are not in $Y$ because we know there are at least $1/2r_p$ such points and they all contribute radii $\Omega(r_p)$. If $r_w \le r_p/2$, then we recurse into the ball $B(w, r_w)$. If every $w' \in B(w, r_w) \cap X$ satisfies that $r_w' \ge r_w/2$ then we are again done by the same argument. Otherwise, we again recurse. We know this process ends since we only have $n$ points and when it ends, we are at distance at most $r_p(1+1/2 + 1/4 + \cdots)  \le 2r_p$ away from $p$. Therefore, we can use the entries of $\alpha$ for the points of $B(p, r_p) \cap X$ that are not in $Y$ to pay for a new facility. 

Now to finish our argument for all points, we just repeat the above cases iteratively: We start with a clustering of $Y$ and its facilities. For every point $p \in X$, if it has a facility near $C'r_p$ for a large constant $C'$ then we are done. Otherwise, we consider $B(p, r_p)$ and perform one of the above two cases.

Applying Theorem \ref{thm:talagrand}, we get that $C_n$ satisfies a concentration inequality of the form
\[ \Pr(|C_n -\text{Med}(C_n) | \ge t) \le e^{-t^2/O(\sum_{p \in X} r_p^2)}. \]
Therefore, the value $C_n$ is concentrated in an interval of length $O(\sqrt{\sum_{p \in X} r_p^2})$. To bound this we note that $r_p^2 \le r_p$ since $r_p \le 1$ always since $p$ is included in the sum in Equation \eqref{eq:rdef}. We now claim that $\sum_p r_p = O(n^{2/3})$. This is because $\sum_p r_p$ is constant factor upper bound on facility location cost from Lemma \ref{lem:const} and for every configuration, we can \emph{deterministically} achieve this cost by considering the following construction: place $k$ points in a uniform grid. Then the cost is $n/\sqrt{k} + k$ since each point is within distance $O(1/\sqrt{k})$ from any center. Optimizing for $k$ we get $\sum_{p \in X} r_p = O(n^{2/3})$. 
\end{proof}

We conjecture that the above analysis actually gives us a tighter concentration bound. To show this, we would need a way to control the value of $\sum_{p \in X} r_p^2$ which would depend on the domain $D$ and the how the points in $X$ are drawn.

We now present an argument that gives a much sharper concentration bound using less sophisticated tools.

\begin{theorem}[Strong Concentration]\label{thm:strong} Let $D = [0,1]^2$ and suppose the points in $X$ are i.i.d\ uniform in $[0,1]^2$. Then,
\[\Pr(|C_n - \E[C_n]| \ge t ) \le e^{-t^2/O(n^{1/3})},\] i.e., $C_n$ is concentrated in an interval of length $n^{1/6}$.
\end{theorem}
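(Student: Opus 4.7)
The plan is to apply Azuma--Hoeffding to the Doob martingale $Z_i = \E[C_n \mid X_1, \ldots, X_i]$, using the ``local'' Talagrand-style Lipschitz estimate extracted from the proof of Theorem~\ref{thm:weak} but averaged over coordinates still to be drawn. The gain over the weak bound is conceptual: whereas Talagrand's inequality forces the worst-case control $\sup \sum_p r_p^2 = O(n^{2/3})$, a Doob martingale only needs control in expectation, and $\E[r_p^2]$ behaves like $n^{-2/3}$. This decoupling should give deterministic increment bounds $|Z_i - Z_{i-1}| \le O\bigl((n-i+1)^{-1/3}\bigr)$, whose squares sum to $O(n^{1/3})$.

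I would first record the one-coordinate Lipschitz estimate implicit in the proof of Theorem~\ref{thm:weak}: if $X$ and $X'$ agree outside the $i$-th coordinate (with values $X_i, X_i'$ there), then $|C_n(X) - C_n(X')| = O\bigl(r_{X_i}(X) + r_{X_i'}(X')\bigr)$, where $r_p(X)$ denotes the radius from \eqref{eq:rdef} in configuration $X$. Coupling $X_i'$ to an independent uniform sample and taking conditional expectation over $(X_i', X_{i+1}, \ldots, X_n)$ yields
\[
|Z_i - Z_{i-1}| \;\le\; \E\bigl[\,O(r_{X_i}(X) + r_{X_i'}(X')) \mid X_1, \ldots, X_i \bigr].
\]

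The crucial next observation is a monotonicity property of $r_p$: adding any point to a configuration can only increase the left-hand side of \eqref{eq:rdef} at a fixed $r$, so the $r_p$ solving the defining equation can only decrease. Hence $r_{X_i}(X) \le r_{X_i}\bigl(\{X_i, X_{i+1}, \ldots, X_n\}\bigr)$ pointwise, and the right-hand side depends only on $n-i+1$ iid uniform points. Rerunning the proof of Theorem~\ref{thm:expectation} with $n-i+1$ points in place of $n$ (boundary effects for $X_i$ cost only a constant factor) gives $\E[r_{X_i}(X) \mid X_1, \ldots, X_i] = O\bigl((n-i+1)^{-1/3}\bigr)$ uniformly in the past, and the same argument applied to $X_i'$ gives the analogous bound. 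Combined with the deterministic sanity bound $r_p \le O(1)$ valid on the unit square, this yields the Azuma constants $c_i = O\bigl(\min(1, (n-i+1)^{-1/3})\bigr)$.

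Summing, $\sum_{i=1}^n c_i^2 = O\bigl(\sum_{k=1}^n k^{-2/3}\bigr) = O(n^{1/3})$, and Azuma--Hoeffding delivers $\Pr(|C_n - \E[C_n]| \ge t) \le 2\exp(-t^2 / O(n^{1/3}))$, as required. The main obstacle is establishing the increment bound $|Z_i - Z_{i-1}| \le c_i$ \emph{deterministically}, uniformly over adversarial choices of the past $X_1, \ldots, X_{i-1}$; this is the entire content of the monotonicity step, which replaces the possibly-bad radius in the actual configuration by one depending only on fresh iid samples still to come, so that the uniform-density calculation of Theorem~\ref{thm:expectation} applies without needing any high-probability event on the past.
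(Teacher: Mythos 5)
Your proposal is correct and follows essentially the same route as the paper: a Doob martingale, a one-coordinate Lipschitz bound of the form $|C_n(X)-C_n(X')| = O(r_{X_i}(X)+r_{X_i'}(X'))$, the monotonicity of $r_p$ under adding points to reduce the conditional increment bound to an expectation over only the fresh i.i.d.\ points, and Azuma--Hoeffding with $\sum_i (n-i+1)^{-2/3} = O(n^{1/3})$. The only cosmetic difference is that you import the Lipschitz estimate from the Talagrand argument of Theorem~\ref{thm:weak}, whereas the paper proves the cleaner one-sided inequality $C(S\cup\{p\}) \le C(S) + O(r_p^{S\cup\{p\}})$ directly via Proposition~\ref{prob:nearby}; both yield the same increment bound.
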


\begin{proof}
The main ideas of the proof generalize to beyond the uniform in the unit square assumption. We explicitly point out where we assume this in the proof. First, let  $S$ be a set of points in $D$.  We first claim that for any $p \not \in S$, 
\begin{equation}\label{eq:ref1}
    C(S \cup \{p\}) \le C(S) + O(r_p^{S \cup \{p\}})
\end{equation}
where $r_p^{S \cup \{p\}}$ means we calculate the radius \eqref{eq:rdef} with respect to the points $S \cup \{p\}$ and $C(\cdot)$ denotes the facility location problem cost. To show this, we either have $r_p^{S \cup \{p\}} = 1$, in which case we can just put a new facility located at $p$, or otherwise, there must exist some some point $q \in B(p, r_p^{S \cup \{p\}}) \cap S$. That point must have been served in $C(S)$ so by Proposition \ref{prob:nearby}, there must exist a facility near $q$ within distance $3r_q^S$ where we calculate the radius of $q$ with respect to the points in $S$ only. 

Our goal is to show that $r_q^S = O(r_p^{S \cup \{p\}})$. Indeed, if it is the case that $r_q^S \le \|q-p\|$ then clearly $r_q^S = r_q^{S \cup \{p\}}$ since $q$'s radius doesn't change. Else, $p \in B(q, r_q^S)$ in which case $r_q^{S \cup \{p\}}$ is potentially smaller than $r_q^S$. However, considering the geometric interpretation of the radii given in Figure \ref{fig:circle_radii}, we know that the distance contributed by $p$ towards $r_q^{S \cup \{p\}}$ is at most half of the other distances (in other words, the dotted line stemming from $p$ contributes total length at most half to the computation of $r_q^{S \cup \{p\}}$ since there is also a dotted line stemming from $q$). Thus, $r_q^{S \cup \{p\}} \ge r_q^S/2$. Finally from Proposition \ref{lem:radii_ub}, it follows that $r_q^S = O(r_p^{S \cup \{p\}})$. 

We now use our above observation to perform a martingale analysis. Consider the Doob martingale $\Lambda_i = \E[C_n \mid X_1, \ldots, X_i]$ for $1\le i \le n$. We analyze the martingale difference $\Delta_i = \Lambda_i - \Lambda_{i-1}$ which can be written as
\[ \Delta_i =  \E[C_n(X_1, \ldots, X_i, \ldots, X_n )  - C_n(X_1, \ldots, X_i', \ldots, X_n ) \mid X_1, \ldots, X_i] \]
where $X_i'$ is an independent copy of $X_i$. Defining $S = \{X_1, \ldots, X_{i-1}, X_{i+1}, \ldots, X_n \}$, we see that
\[\Delta_i = \E[C_n(S \cup \{X_i\})  - C_n(S \cup \{X_i'\}) \mid X_1, \ldots, X_i] \]
and therefore,
\[|\Delta_i| \le \E[ r_{X_i}^{S \cup \{X_i\}} + r_{X_i'}^{S \cup \{X_i'\}} \mid X_1, \ldots, X_i ] \]
by \eqref{eq:ref1}. Now crucially, we know that the radius defined in \eqref{eq:rdef} \emph{can only decrease} as more points are added. Therefore, we bound each of the expectations above using only the randomness of the remaining $n-i$ points.

Note that everything we have stated so far in the proof is perfectly valid over general domains in $\R^d$ for any $d$ and any choice of distribution that the input points are drawn from, as long as we draw points independently. Now if 
\[ |\Delta_i|^2 \le f(i) \]
for some real valued function $f$, then we immediately arrive at 
\[\Pr(|C_n - \E[C_n]| \ge t ) \le e^{-t^2/\sum_{i=1}^n f(i)}\]
from the Azuma-Hoeffding inequality. The uniform assumption makes the calculation of $f(i)$ particularly tractable. In general, we can calculate $f(i)$ by obtaining a version of Theorem \ref{thm:expectation} for an alternative distribution of choice as we will see shortly.

We now use our assumptions that the domain is $D = [0,1]^2$ and $X$ consists of i.i.d.\ uniform points on the unit square to bound $|\Delta_i|$. This is the main part of the proof where we use the distribution and domain assumptions and the argument which follows can be adapted for other distributional and domain assumptions by finding a suitable bound on $|\Delta_i|$.

From a similar analysis as in Theorem \ref{thm:expectation} (except for a slight caveat that will be addressed in a bit), we know that each of the expectations in our martingale difference can be bounded by $O((n-i)^{-1/3})$ and so it follows that $|\Delta_i| = O((n-i)^{-1/3})$. We now calculate $\sum_i |\Delta_i|^2$. We have that
\begin{equation}\label{eq:var}
    \sum_{i=1}^n (n-i)^{-2/3}  \sim \int_1^n x^{-2/3} \  dx = O(n^{1/3})
\end{equation}
and so by the Azuma-Hoeffding inequality, we get the concentration bound 
\[\Pr( |C_n - \E[C_n]| \ge t )  \le e^{-t^2/O(n^{1/3})}, \]
as desired.

To tie up the loose ends, we note that the upper bound for the expectation given in Theorem \ref{thm:expectation} might not hold if $n-i$ is too small. However, we don't care about this case since we can substitute the deterministic bound $|\Delta_i| = O(1)$ which always holds since the unit square is bounded. In particular, we can use the deterministic bound say when $n-i = O(n^{1/3})$ in which case the `variance' calculation of \eqref{eq:var} still gives us the same asymptotics.
\end{proof}

An interesting open question is if the concentration bound of Theorem \ref{thm:strong} is tight for the uniform unit square case. This is possibly a much harder question but we suspect the answer to be yes. This is because we can show the variance of the quantity $\sum_{p \in X} r_p$ is at least $\Omega(n^{1/3})$ (by performing similar calculations as in Theorem \ref{thm:expectation}) which implies that the quantity $\sum_{p \in X} r_p$ truly fluctuates on an interval of length $\Omega(n^{1/6})$ (the standard deviation). Since this quantity has been very influential in designing algorithms for the facility location problem \cite{MP_alg, sublin_MP}, it hints that our bound is close to the truth. Of course this doesn't formally imply anything for the facility location problem case since $\sum_{p \in X} r_p$ is only a constant factor approximation to the cost. In addition, a $\Omega(n^{1/6})$ fluctuation matches the fluctuation when we `plug in' $k = n^{2/3}$ into the Rhee and Talagrand's bound as done in Section \ref{sec:heuristic}, again hinting that \ref{thm:strong} has the right concentration.

\paragraph{Other Open Problems.}
In this paper we analyzed a well known combinatorial optimization problem under stochastic inputs by borrowing algorithmic ideas that introduce a `local' property. Its plausible that other complicated algorithm problems that allow for local or greedy algorithms can also be analyzed under random inputs. We leave this as an interesting open direction.

\paragraph{Acknowledgements.}
Sandeep Silwal is supported by an NSF Graduate Research Fellowship under Grant No. 1745302, NSF TRIPODS program (award DMS-2022448), and Simons Investigator Award.

\bibliographystyle{abbrv}
\bibliography{main}

\begin{thebibliography}{10}

\bibitem{sat3}
D.~Achlioptas and C.~Moore.
\newblock The asymptotic order of the random k-sat threshold.
\newblock In {\em The 43rd Annual IEEE Symposium on Foundations of Computer
  Science, 2002. Proceedings.}, pages 779--788. IEEE, 2002.

\bibitem{sat1}
D.~Achlioptas and C.~Moore.
\newblock Random k-sat: Two moments suffice to cross a sharp threshold.
\newblock {\em SIAM Journal on Computing}, 36(3):740--762, 2006.

\bibitem{sat2}
D.~Achlioptas and Y.~Peres.
\newblock The threshold for random k-sat is 2k (ln 2-o (k)).
\newblock In {\em Proceedings of the thirty-fifth annual ACM symposium on
  Theory of computing}, pages 223--231, 2003.

\bibitem{sublin_MP}
M.~Badoiu, A.~Czumaj, P.~Indyk, and C.~Sohler.
\newblock Facility location in sublinear time.
\newblock In L.~Caires, G.~F. Italiano, L.~Monteiro, C.~Palamidessi, and
  M.~Yung, editors, {\em Automata, Languages and Programming}, pages 866--877,
  Berlin, Heidelberg, 2005. Springer Berlin Heidelberg.

\bibitem{random_ref1}
J.~Beardwood, J.~H. Halton, and J.~M. Hammersley.
\newblock The shortest path through many points.
\newblock {\em Mathematical Proceedings of the Cambridge Philosophical
  Society}, 55(4):299--327, 1959.

\bibitem{tsp1}
J.~Beardwood, J.~H. Halton, and J.~M. Hammersley.
\newblock The shortest path through many points.
\newblock {\em Mathematical Proceedings of the Cambridge Philosophical
  Society}, 55(4):299–327, 1959.

\bibitem{bin2}
J.~Bentley, D.~S. Johnson, F.~T. Leighton, C.~C. McGeoch, and L.~A. McGeoch.
\newblock Some unexpected expected behavior results for bin packing.
\newblock In {\em STOC '84}, 1984.

\bibitem{bin1}
J.~Bentley, D.~S. Johnson, T.~Leighton, and C.~C. McGeoch.
\newblock An experimental study of bin packing.
\newblock 1983.

\bibitem{tsp5}
D.~Bertsimas and L.~H. Howell.
\newblock Further results on the probabilistic traveling salesman problem.
\newblock {\em European Journal of Operational Research}, 65(1):68--95, 1993.

\bibitem{mst7}
A.~Beveridge, A.~M. Frieze, and C.~McDiarmid.
\newblock Random minimum length spanning trees in regular graphs.
\newblock {\em Combinatorica}, 18:311--333, 1998.

\bibitem{chung2006}
F.~Chung and L.~Lu.
\newblock {\em Complex Graphs and Networks (Cbms Regional Conference Series in
  Mathematics)}.
\newblock American Mathematical Society, USA, 2006.

\bibitem{bin3}
E.~G. Coffman, K.~So, M.~Hofri, and A.~C.-C. Yao.
\newblock A stochastic model of bin-packing.
\newblock {\em Inf. Control.}, 44:105--115, 1980.

\bibitem{sat4}
A.~Coja-Oghlan.
\newblock A better algorithm for random k-sat.
\newblock {\em SIAM Journal on Computing}, 39(7):2823--2864, 2010.

\bibitem{sat5}
A.~Coja-Oghlan.
\newblock A better algorithm for random k-sat.
\newblock {\em SIAM Journal on Computing}, 39(7):2823--2864, 2010.

\bibitem{mst8}
C.~Cooper, A.~M. Frieze, N.~Ince, S.~Janson, and J.~H. Spencer.
\newblock On the length of a random minimum spanning tree.
\newblock {\em Combinatorics, Probability and Computing}, 25:89 -- 107, 2015.

\bibitem{matching4}
M.~Dyer, A.~Frieze, and C.~McDiarmid.
\newblock Partitioning heuristics for two geometric maximization problems.
\newblock {\em Operations research letters}, 3(5):267--270, 1984.

\bibitem{random_ref3}
A.~Frieze.
\newblock On the value of a random minimum spanning tree problem.
\newblock {\em Discrete Applied Mathematics}, 10(1):47 -- 56, 1985.

\bibitem{mst1}
A.~M. Frieze.
\newblock On the value of a random minimum spanning tree problem.
\newblock {\em Discret. Appl. Math.}, 10:47--56, 1985.

\bibitem{mst5}
A.~M. Frieze and C.~McDiarmid.
\newblock On random minimum length spanning trees.
\newblock {\em Combinatorica}, 9:363--374, 1989.

\bibitem{mst6}
A.~M. Frieze, M.~Ruszink{\'o}, and L.~Thoma.
\newblock A note on random minimum length spanning trees.
\newblock {\em Electron. J. Comb.}, 7, 2000.

\bibitem{random_ref4}
A.~M. Frieze and J.~E. Yukich.
\newblock {\em Probabilistic Analysis of the TSP}, pages 257--307.
\newblock Springer US, Boston, MA, 2007.

\bibitem{tsp2}
P.~Jaillet.
\newblock {\em Probabilistic traveling salesman problems}.
\newblock PhD thesis, Massachusetts Institute of Technology, 1985.

\bibitem{mst3}
S.~Janson.
\newblock The minimal spanning tree in a complete graph and a functional limit
  theorem for trees in a random graph.
\newblock {\em Random Struct. Algorithms}, 7:337--356, 1995.

\bibitem{bin4}
N.~Karmarkar.
\newblock Probabilistic analysis of some bin-packing problems.
\newblock In {\em FOCS 1982}, 1982.

\bibitem{random_ref2}
R.~M. Karp.
\newblock Probabilistic analysis of partitioning algorithms for the
  traveling-salesman problem in the plane.
\newblock {\em Mathematics of Operations Research}, 2(3):209--224, 1977.

\bibitem{matching3}
R.~M. Karp.
\newblock Probabilistic analysis of partitioning algorithms for the
  traveling-salesman problem in the plane.
\newblock {\em Mathematics of operations research}, 2(3):209--224, 1977.

\bibitem{tsp3}
T.~Leip{\"a}l{\"a}.
\newblock On the solutions of stochastic traveling salesman problems.
\newblock {\em European Journal of Operational Research}, 2(4):291--297, 1978.

\bibitem{MP_alg}
R.~R. {Mettu} and C.~G. {Plaxton}.
\newblock The online median problem.
\newblock In {\em Proceedings 41st Annual Symposium on Foundations of Computer
  Science}, pages 339--348, Nov 2000.

\bibitem{mst4}
M.~D. Penrose.
\newblock Random minimal spanning tree and percolation on the n -cube.
\newblock {\em Random Structures and Algorithms}, 12:63--82, 1998.

\bibitem{related}
N.~Piersma.
\newblock {A Probabilistic Analysis of the Capacitated Facility Location
  Problem}.
\newblock {\em Journal of Combinatorial Optimization}, 3(1):31--50, July 1999.

\bibitem{median}
W.~T. Rhee and M.~Talagrand.
\newblock A concentration inequality for the k-median problem.
\newblock {\em Mathematics of Operations Research}, 14(2):189--202, 1989.

\bibitem{bin5}
P.~W. Shor.
\newblock The average-case analysis of some on-line algorithms for bin packing.
\newblock {\em Combinatorica}, 6:179--200, 1986.

\bibitem{mst2}
J.~M. Steele.
\newblock On frieze's $\chi$(3) limit for lengths of minimal spanning trees.
\newblock {\em Discret. Appl. Math.}, 18:99--103, 1987.

\bibitem{book}
J.~M. Steele.
\newblock {\em Probability theory and combinatorial optimization}.
\newblock Society for Industrial and Applied Mathematics, 1997.

\bibitem{matching1}
K.~J. Supowit, E.~M. Reingold, and D.~A. Plaisted.
\newblock The travelling salesman problem and minimum matching in the unit
  square.
\newblock {\em SIAM Journal on Computing}, 12(1):144--156, 1983.

\bibitem{tsp4}
K.~J. Supowit, E.~M. Reingold, and D.~A. Plaisted.
\newblock The travelling salesman problem and minimum matching in the unit
  square.
\newblock {\em SIAM Journal on Computing}, 12(1):144--156, 1983.

\bibitem{notes}
J.~W{\"a}stlund.
\newblock Notes on random optimization problems.
\newblock \url{http://www.math.chalmers.se/~wastlund/RandOptNotes.pdf}, 2008.

\bibitem{matching2}
M.~Weber and T.~M. Liebling.
\newblock Euclidean matching problems and the metropolis algorithm.
\newblock {\em Zeitschrift f{\"u}r Operations Research}, 30(3):A85--A110, 1986.

\end{thebibliography}

\end{document}